\newcommand{\tr}{^{\prime}}
\renewcommand{\b}{\boldsymbol}
\DeclareMathOperator{\sign}{sign}
\DeclareMathOperator{\diag}{diag}
\newtheorem{lemma}{Lemma}
\newtheorem{definition}{Definition}
\newtheorem{proposition}{Proposition}
\newtheorem{remark}{Remark}
\definecolor{dg}{rgb}{0.1,0.5,0.1}
\title{Two algorithms for fitting constrained marginal models}
 \author{R.~J.~Evans\\
 Statistical Laboratory\\
 University of Cambridge, UK\\
 \and
 A.~Forcina\\
 Dipartimento di Economia, Finanza e Statistica\\
 University of Perugia, Italy}%
\begin{document}

\maketitle

\begin{abstract}
We study in detail the two main algorithms which have been considered for fitting constrained marginal models to discrete data, one based on Lagrange multipliers and the other on a regression model.  We show that the updates produced by the two methods are identical, but that the Lagrangian method is more efficient in the case of identically distributed observations.
We provide a generalization of the regression algorithm for modelling the effect of exogenous individual-level covariates,
a context in which the use of the Lagrangian algorithm would be infeasible for even moderate sample sizes.
An extension of the method to likelihood-based estimation under $L_1$-penalties is also considered.
\end{abstract}


\begin{center}
\textbf{Keywords:}
categorical data, $L_1$-penalty, marginal log-linear model,
 maximum likelihood, non-linear constraint.
\end{center}

%
\section{Introduction}
The application of marginal constraints to multi-way contingency
tables has been much investigated in the last 20 years; see, for
example, \citet{mccullagh:89, liang:92, langagresti, glonek:95, agresti:02,
BeCrHa:09}.  \citet{BerRud:02} introduced
marginal log-linear parameters (MLLPs), which generalize other
discrete parameterizations including ordinary log-linear parameters
and Glonek and McCullagh's multivariate logistic parameters. The
flexibility of this family of parameterizations enables their
application to many popular classes of conditional independence
models, and especially to graphical models \citep{FoLuMa:10, rudas:10, EvaRic:11}.
\citet{BerRud:02} show that, under certain
conditions, models defined by linear constraints on MLLPs are curved
exponential families. However, na\"ive algorithms for maximum
likelihood estimation with MLLPs face several challenges: in
general, there are no closed form equations for computing raw
probabilities from MLLPs, so direct evaluation of the log-likelihood
can be time consuming; in addition, MLLPs are not necessarily
variation independent and, as noted by \citet{BaCoFo:07}, ordinary
Newton-Raphson or Fisher scoring methods may become stuck by
producing updated estimates which are incompatible.

\citet{Lang:96} and \citet{Bergsma:97}, amongst others, have tried
to adapt a general algorithm introduced by \citet{AitSil:58} for
constrained maximum likelihood estimation to the
context of marginal models.  In this paper we provide an explicit
formulation of Aitchison and Silvey's algorithm, and show that an
alternative method due to \citet{ColFor:01} is equivalent; we term
this second approach the \emph{regression algorithm}.
Though the regression algorithm is less efficient, we show that it can be extended to deal with individual-level
covariates, a context in which Aitchison and Silvey's approach
is infeasible, unless the sample size is very small.  A variation of these algorithms, which can be
used to fit marginal log-linear models under $L_1$-penalties, and therefore perform automatic model selection, is also given.

Section \ref{sec:notation} reviews marginal log-linear
models and their basic properties, while in Section \ref{sec:algo} we formulate
the two algorithms, show that they are equivalent and discuss their properties.
In Section \ref{sec:ind} we derive an extension of the regression algorithm which can incorporate the effect of individual-level covariates.  Finally Section \ref{sec:l1} considers similar methods for $L_1$-constrained estimation.
\section{Notations and preliminary results} \label{sec:notation}
Let $X_j$, $j=1,\ldots, d$ be categorical random variables
taking values in $\{1, \dots,c_j\}$.  The joint distribution of
$X_1,\dots, X_d$ is determined by the vector of joint probabilities
$\b\pi$ of dimension $t=\prod_1^d c_j$, whose entries correspond to
cell probabilities, and are assumed to be strictly positive; we
take the entries of $\b\pi$ to be in lexographic order.
Further, let $\b y$ denote the vector of cell frequencies with
entries arranged in the same order as $\b\pi$. We write the
multinomial log-likelihood in terms of the canonical parameters as
\begin{align*}
l(\b\theta) = \b y\tr\b G\b\theta-n \log[\b 1_t\tr\exp(\b
G\b\theta)]
\end{align*}
\citep[see,
for example,][p. 699]{BaCoFo:07};
here $n$ is the sample size, $\b 1_t$ a vector of length
$t$ whose entries are all 1, and $\b G$ a $t\times (t-1)$
full rank design matrix which determines the log-linear
parameterization. The mapping between the canonical
parameters and the joint probabilities may be expressed as
\begin{align*}
\log(\b\pi)=\b G\b\theta-\b 1_t\log[\b 1_t\tr\exp(\b G\b\theta)]
\quad \Leftrightarrow\quad \b\theta=\b L\log(\b\pi),
\end{align*}
where $\b L$ is a $(t-1)\times t$ matrix of row contrasts and $\b
L\b G$ = $\b I_{t-1}$.

The score vector, $\b s$, and the expected
information matrix, $\b F$, with respect to $\b\theta$ take the form
\begin{align*}
&\b s = \b G\tr (\b y-n\b\pi)&&\mbox{and}&&\b F = n \b G\tr\b\Omega\b G;
\end{align*}
here $\b\Omega$ = $\diag(\b\pi) -\b\pi\b\pi\tr$.

\subsection{Marginal log-linear parameters}

Marginal log-linear parameters (MLLPs) enable the simultaneous modelling of
several marginal distributions \citep[see, for example,][Chapters 2
and 4]{BeCrHa:09} and the specification of suitable conditional
independencies within marginal distributions of interest
\cite[see][]{EvaRic:11}. In the following let $\b\eta$ denote an
arbitrary vector of MLLPs; it is well known that this can be written as
\begin{align*}
\b\eta = \b C\log(\b M \b\pi),
\end{align*}
where $\b C$ is a suitable matrix of row contrasts, and $\b M$ a
matrix of 0's and 1's producing the appropriate margins
\citep[see, for example,][Section 2.3.4]{BeCrHa:09}.

\citet{BerRud:02} have shown that if a vector of MLLPs $\b\eta$ is
{\em complete} and {\em hierarchical}, two properties defined below,
models determined by linear restrictions on $\b\eta$ are curved
exponential families, and thus smooth. Like ordinary log-linear
parameters, MLLPs may be grouped into interaction terms involving a
particular subset of variables; each interaction
term must be defined within a margin of which it is a subset.
\begin{definition}
A vector of MLLPs $\b\eta$ is called {\em complete} if every possible
interaction is defined in precisely one margin.
\end{definition}
\begin{definition}
A vector of MLLPs $\b\eta$ is called {\em hierarchical} if there is
a non-decreasing ordering of the margins of interest $M_1, \dots,
M_s$ such that, for each $j=1,\dots s$, no interaction
term which is a subset of $M_j$ is defined within a later margin.
\end{definition}
\section{Two algorithms for fitting marginal log-linear models}\label{sec:algo}
Here we describe the two main algorithms used for fitting models of the kind described above.
\subsection{An adaptation of Aitchison and Silvey's algorithm}
\citet{AitSil:58} study maximum likelihood estimation under
non-linear constraints in a very general context, showing that,
under certain conditions, the maximum likelihood estimates exist and
are asymptotically normal; they also outline an algorithm for
computing those estimates.  Suppose we wish to maximize $l(\b\theta)$
subject to $\b h(\b\theta)=\b 0$, a set of $r$ non-linear
constraints, under the assumption that the second derivative of $\b h(\b\theta)$ exists and is bounded.
Aitchison and
Silvey consider stationary points of the function $l(\b\theta) + \b h(\b\theta)\tr \b\lambda$, where $\b\lambda$ is a vector of Lagrange multipliers; this leads to the system of equations
\begin{equation}
\begin{split}
\b s(\hat{\b\theta}) + \b H(\hat{\b\theta}) \hat{\b\lambda} &= \b 0 \\
\b h(\hat{\b\theta}) &= \b 0,
\label{eqn:lagrange}
\end{split}
\end{equation}
where $\hat{\b\theta}$ is the ML estimate and $\b H$ the derivative
of $\b h\tr$ with respect to $\b\theta$. Since these are non-linear
equations, they suggest an iterative algorithm which proceeds as
follows: suppose that at the current iteration we have $\b\theta_0$,
a value reasonably close to $\hat{\b\theta}$.  Replace $\b s$ and
$\b h$ with first order approximations around $\b\theta_0$;
in addition replace $\b H(\hat{\b\theta})$ with $\b H(\b\theta_0)$
and the second derivative of the log-likelihood with $-\b F$, minus
the expected information matrix.
The resulting equations, after rearrangement,
may be written in matrix form as
\begin{align*}
 \left(
\begin{array}{c} \hat{\b\theta} - \b\theta_0 \\
\hat{\b\lambda}
\end{array}\right) =
\left(
\begin{array}{cc} \b F_0 & - \b H_0 \\ - \b H_0\tr & \b 0
\end{array}\right)^{-1}
\left(
\begin{array}{c} \b s_0 \\ \b h_0
\end{array}\right),
\end{align*}
where $\b s_0,\:\b F_0,\:\b H_0$ and so on denote the corresponding quantities evaluated at $\b\theta_0$.
To compute a solution, \citet{AitSil:58} exploit the structure of
the partitioned matrix, while \citet{Bergsma:97} solves explicitly
for $\hat{\b\theta}$ by substitution; in both cases, if
we are uninterested in the Lagrange multipliers, we get the
updating equation
\begin{equation}
\hat{\b\theta} = \b\theta_0+\b F_0^{-1}\b s_0-\b F_0^{-1}\b H_0(\b
H_0\tr \b F_0^{-1} \b H_0)^{-1}(\b H\tr_0 \b F_0^{-1}\b
s_0+\b h_0). \label{ASupdate}
\end{equation}
As noted by \citet{Bergsma:97}, the algorithm does not always
converge unless some sort of step length adjustment is introduced.

Linearly constrained marginal models are defined by $\b K'\b\eta$ = $\b
0$, where $\b K$ is a matrix of full column rank $r \leq t-1$. The
multinomial likelihood is a regular exponential family, so these
models may be fitted using the smooth constraint  $\b h(\b\theta)$ =
$\b K\tr \b\eta(\b \theta) = \b 0$, which implies that
\begin{align*}
\b H\tr =\frac{\partial\b h}{\partial\b\theta\tr}=\frac{\partial\b
h}{\partial\b\eta\tr} \frac{\partial\b\eta}{\partial\b\theta\tr} =\b
K\tr \b C\diag(\b M\b\pi)^{-1}\b M\diag(\b\pi)\b G.
\end{align*}
\begin{remark} \label{rmk:smooth}
In the equation above we have replaced $\b\Omega$ with $\diag(\b\pi)$
by exploiting the fact that $\b\eta$ is a homogeneous function of
$\b\pi$ \citep[see][Section 2.3.4]{BeCrHa:09}.
If the constrained model were not smooth then at
singular points the Jacobian matrix $\b R$ would not be invertible,
implying that $\b H$ is not of full rank and thus violating a
crucial assumption in \citet{AitSil:58}. It has been shown
\citep[][Theorem 3]{BerRud:02} that completeness is a necessary
condition for smoothness.

Calculation of (\ref{ASupdate}) may be simplified by noting that $\b K\tr\b C$ does not need to be updated; in addition,  if we choose, for example, $\b G$ to be the identity matrix of size $t$ with the first column removed, an explicit inverse of $\b F$ exists:
\begin{align*}
\b F^{-1} = \left[n ( \diag(\dot{\b\pi}) - \dot{\b\pi} \dot{\b\pi}\tr ) \right]^{-1} =
n^{-1} \left[\diag(\dot{\b\pi})^{-1}+\b 1_{t-1}\b 1_{t-1}\tr/(1-\b 1_{t-1}\tr \dot{\b\pi}) \right],
\end{align*}
where $\dot{\b\pi}$ denotes the vector $\b\pi$ with the first element removed; this expression may be exploited when computing $ \b F^{-1} \b H$.
\end{remark}
\subsection{A regression algorithm}\label{sec:modas}
By noting that the Aitchison-Silvey algorithm is essentially based on a quadratic approximation of $l(\b\theta)$ with a linear approximation of the constraints, \citet{ColFor:01}
designed an algorithm which they believed to be equivalent to the
original, though no formal argument was provided; this equivalence is proven in Proposition \ref{prop:equiv} below.
Recall that, by elementary linear algebra, there exists a $(t-1) \times (t-r-1)$ design matrix $\b X$ of full column rank such that $\b K' \b X = \b 0$, from which it follows that $\b\eta$ = $\b X \b\beta$ for a
vector of $t-r-1$ unknown parameters $\b\beta$. Let
\begin{align*}
\b R = \frac{\partial \b\theta}{\partial\b\eta\tr}=[\b C\diag(\b M\b\pi)^{-1}\b M\diag(\b\pi)\b G]^{-1},
\end{align*}
and $\bar{\b s}$ = $\b R\tr\b s$, $\bar{\b F}$ = $\b R\tr \b F\b R$ respectively denote the score and information relative to $\b\eta$; then the \emph{regression algorithm}
consists of alternating the following steps:
\begin{enumerate}
\item update the estimate of $\b\beta$ by
\begin{equation}
\hat{\b\beta}-\b\beta_0 = (\b X\tr\bar{\b F}_0\b X)^{-1}\b X\tr (\bar{\b F}_0 \b\gamma_0+\bar{\b s}_0),
\label{step1}
\end{equation}
where $\b\gamma_0 = \b\eta_0 - \b X \b\beta_0$;\\
\item update $\b\theta$ by
\begin{equation}
\b{\hat\theta}-\b\theta_0 =\b R_0[\b X(\hat{\b\beta}- \b\beta_0)-\b\gamma_0].
\label{step2}
\end{equation}
\end{enumerate}
\begin{proposition} \label{prop:equiv}
The updating equation in (\ref{ASupdate}) is equivalent to the combined steps given in (\ref{step1}) and (\ref{step2}).
\end{proposition}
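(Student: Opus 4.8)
The plan is to recast both updates in the common form $\hat{\b\theta}-\b\theta_0=\b R_0\b v_0$ and then to identify the two vectors $\b v_0$ by means of a single linear-algebra identity linking the complementary matrices $\b X$ and $\b K$. All quantities below are evaluated at the current point $\b\theta_0$, and, in accordance with Remark \ref{rmk:smooth}, I assume the model is smooth, so that $\b R_0$ is well defined and invertible.

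First I would translate (\ref{ASupdate}) into $\b\eta$-coordinates. The chain rule gives $\b H\tr=\b K\tr\b C\diag(\b M\b\pi)^{-1}\b M\diag(\b\pi)\b G=\b K\tr\b R^{-1}$, hence $\b R\tr\b H=\b K$ and $\b h(\b\theta_0)=\b K\tr\b\eta_0$; together with $\bar{\b s}_0=\b R_0\tr\b s_0$ and $\bar{\b F}_0=\b R_0\tr\b F_0\b R_0$ (so that $\bar{\b F}_0^{-1}\bar{\b s}_0=\b R_0^{-1}\b F_0^{-1}\b s_0$) one obtains the partial identities $\b F_0^{-1}\b s_0=\b R_0\bar{\b F}_0^{-1}\bar{\b s}_0$, $\b F_0^{-1}\b H_0=\b R_0\bar{\b F}_0^{-1}\b K$, $\b H_0\tr\b F_0^{-1}\b H_0=\b K\tr\bar{\b F}_0^{-1}\b K$ and $\b H_0\tr\b F_0^{-1}\b s_0=\b K\tr\bar{\b F}_0^{-1}\bar{\b s}_0$. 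Substituting them into (\ref{ASupdate}) yields
\[\hat{\b\theta}-\b\theta_0=\b R_0\bigl[\bar{\b F}_0^{-1}\bar{\b s}_0-\bar{\b F}_0^{-1}\b K(\b K\tr\bar{\b F}_0^{-1}\b K)^{-1}\b K\tr(\bar{\b F}_0^{-1}\bar{\b s}_0+\b\eta_0)\bigr].\]
On the other hand, inserting (\ref{step1}) into (\ref{step2}) and using $\b\gamma_0=\b\eta_0-\b X\b\beta_0$ gives
\[\hat{\b\theta}-\b\theta_0=\b R_0\bigl[\b X(\b X\tr\bar{\b F}_0\b X)^{-1}\b X\tr(\bar{\b F}_0\b\gamma_0+\bar{\b s}_0)-\b\gamma_0\bigr].\]
Since $\b R_0$ is invertible, it remains to show the two bracketed vectors coincide.

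The crux is the identity, valid for any positive definite $\b A$ and full-column-rank $\b X$ (with $t-r-1$ columns) and $\b K$ (with $r$ columns) satisfying $\b K\tr\b X=\b 0$,
\[\b A^{-1}-\b A^{-1}\b K(\b K\tr\b A^{-1}\b K)^{-1}\b K\tr\b A^{-1}=\b X(\b X\tr\b A\b X)^{-1}\b X\tr,\]
equivalently $\b I_{t-1}-\b A^{-1}\b K(\b K\tr\b A^{-1}\b K)^{-1}\b K\tr=\b X(\b X\tr\b A\b X)^{-1}\b X\tr\b A$. I would prove it by noting that both sides of the first equality are symmetric, have column space equal to that of $\b X$ (so null space equal to that of $\b K$, using $\b K\tr\b X=\b 0$ and $\operatorname{rank}\b X+\operatorname{rank}\b K=t-1$), and become equal after right multiplication by $\b A\b X$ — whose columns, together with those of $\b K$, span $\mathbb{R}^{t-1}$; a symmetric matrix with prescribed null space is pinned down by its action on such a spanning set. (Equivalently, the identity is exactly what block-inverting the bordered matrix $\bigl(\begin{smallmatrix}\b A&\b K\\ \b K\tr&\b 0\end{smallmatrix}\bigr)$ produces — the matrix underlying (\ref{ASupdate}).)

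To finish, set $\b A=\bar{\b F}_0$, $\b S=\b X(\b X\tr\b A\b X)^{-1}\b X\tr$ and $\b Q=\b A^{-1}\b K(\b K\tr\b A^{-1}\b K)^{-1}\b K\tr$, so the identity reads $\b A^{-1}-\b Q\b A^{-1}=\b S$ and $\b I_{t-1}-\b S\b A=\b Q$. The Aitchison--Silvey bracket then reduces to $(\b A^{-1}-\b Q\b A^{-1})\bar{\b s}_0-\b Q\b\eta_0=\b S\bar{\b s}_0-\b Q\b\eta_0$, while the regression bracket reduces to $\b S\bar{\b s}_0+\b S\b A\b\gamma_0-\b\gamma_0=\b S\bar{\b s}_0-(\b I_{t-1}-\b S\b A)\b\gamma_0=\b S\bar{\b s}_0-\b Q\b\gamma_0$; their difference is $\b Q(\b\gamma_0-\b\eta_0)=-\b Q\b X\b\beta_0=\b 0$, because $\b Q\b X=\b A^{-1}\b K(\b K\tr\b A^{-1}\b K)^{-1}\b K\tr\b X=\b 0$. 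I expect the only genuine obstacle to be the identity in the previous paragraph; the remaining steps are bookkeeping with the chain-rule relations between $(\b s,\b F,\b H)$ and $(\bar{\b s},\bar{\b F},\b K)$, and the argument tacitly relies on smoothness of the model so that $\b R_0$ is invertible.
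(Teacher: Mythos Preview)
Your proof is correct and follows essentially the same route as the paper: translate (\ref{ASupdate}) into $\b\eta$-coordinates via the relations $\b H_0=(\b R_0\tr)^{-1}\b K$, $\bar{\b s}_0=\b R_0\tr\b s_0$, $\bar{\b F}_0=\b R_0\tr\b F_0\b R_0$, invoke the projection identity $\b A^{-1}-\b A^{-1}\b K(\b K\tr\b A^{-1}\b K)^{-1}\b K\tr\b A^{-1}=\b X(\b X\tr\b A\b X)^{-1}\b X\tr$, and match the resulting brackets. The only minor difference is that the paper proves the identity by the change of variables $\b U=\b A^{-1/2}\b K$, $\b V=\b A^{1/2}\b X$ and the orthogonal-projection decomposition $\b U(\b U\tr\b U)^{-1}\b U\tr+\b V(\b V\tr\b V)^{-1}\b V\tr=\b I$, whereas you verify it by checking agreement on the spanning set $\{\b K,\b A\b X\}$; both arguments are equally short and valid.
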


\begin{proof}
First, consider matrices $\b X$ and $\b K$ such that the columns of $\b X$ span the orthogonal complement of the space spanned by the columns of $\b K$.  Then we claim that for any symmetric and positive definite matrix $\b W$
\begin{equation}
\b W^{-1}-\b W^{-1}\b K(\b K\tr \b W^{-1}\b K)^{-1}\b K\tr \b W^{-1}
= \b X(\b X\tr\b W\b X)^{-1}\b X\tr.\label{LAId}
\end{equation}
To see this, let $\b U$ = $\b W^{-1/2}\b K$ and $\b V$ = $\b W^{1/2}\b X$ and note that $\b U\tr \b V$ = $\b K\tr \b X=\b 0$, then (\ref{LAId}) follows from the identity $\b U(\b U\tr \b U)^{-1}\b U\tr + \b V(\b V\tr \b V)^{-1}\b V\tr$ = $ \b I$.

Now, recall $\bar{\b s}$ = $\b R\tr\b s$ and $\bar{\b F}$ = $\b
R\tr \b F\b R$, and note that
\begin{align*}
\b H\tr \b F^{-1} \b H = \b K\tr \b R^{-1} \b F^{-1}(\b R^{-1})\tr
\b K=\b K\tr \bar{\b F}^{-1}\b K;
\end{align*}
using this in the updating equation
(\ref{ASupdate}) enables us to rewrite it as
\begin{align}
\begin{split}
\b R^{-1}_0 (\b\theta-\b\theta_0) &= [\bar{\b F}_0^{-1} -\bar{\b F}_0^{-1} \b K (\b K' \bar{\b F}_0^{-1} \b K)^{-1}\b K' \bar{\b F}_0^{-1}] \bar{\b s}_0+ \\
& \qquad- \bar{\b F}_0^{-1} \b K (\b K' \bar{\b
F}_0^{-1} \b K) \b K \bar{\b F}_0^{-1}  \bar{\b F}_0 \b \eta_0.
\label{ASup}
\end{split}
\end{align}
Set $\b W$ = $\bar{\b F}_0$ and note that (\ref{LAId}) may be
substituted into the first component of (\ref{ASup}) and that its
equivalent formulation
\begin{align*}
\bar{\b F}_0^{-1}\b K(\b K\tr \bar{\b F}_0^{-1}\b K)^{-1}\b K\tr
\bar{\b F}_0^{-1} =\bar{\b F}_0^{-1}- \b X(\b X\tr\bar{\b F}_0\b
X)^{-1}\b X\tr
\end{align*}
may be substituted into the second component, giving
\begin{align*}
\b R^{-1}_0 (\b\theta-\b\theta_0) = \b X (\b X\tr \bar{\b F}_0 \b
X)^{-1}\b X\tr\bar{\b s}_0 - \b \eta_0+\b X (\b X\tr \bar{\b F}_0 \b
X)^{-1}\b X\tr \bar{\b F}_0 \b \eta_0.
\end{align*}
This is easily seen to be the same as combining equations (\ref{step1}) and (\ref{step2}).
\end{proof}

\begin{remark}
From the form of the updating equations (\ref{ASupdate}), (\ref{step1}) and (\ref{step2}) it is clear that Proposition \ref{prop:equiv} remains true if identical step length adjustments are applied to the $\b\theta$ updates.  This does not hold, however, if adjustments are applied to the $\b\beta$ updates of the regression algorithm.
\end{remark}
\subsubsection{Derivation of the regression algorithm}
In a neighbourhood of $\b\theta_0$, approximate $l(\b\theta)$ by a quadratic function $Q$ having the same information matrix and the same score vector as $l$ at $\b\theta_0$,
\begin{align*}
l(\b\theta) \cong Q(\b\theta) = -\frac{1}{2}(\b\theta-\b t_0)\tr \b F_0 (\b\theta-\b t_0), \quad \mbox{ where } \quad \b t_0 = \b\theta_0+\b F_0^{-1}\b s_0.
\end{align*}
Now compute a linear approximation of $\b\theta$ with respect
to $\b\beta$ in a neighbourhood of $\b\theta_0$,
\begin{equation}
\b\theta-\b\theta_0\cong \b R_0 (\b X\b\beta-\b\eta_0); \label{linap}
\end{equation}
substituting into the expression for $Q$
we obtain a quadratic function in $\b\beta$.  By adding and
subtracting $\b R_0\b X\b\beta_0$ and setting $\b\delta$ =
$\b\beta-\b\beta_0$, we have
\begin{align*}
Q(\b\beta) = -\frac{1}{2}[\b R_0 \b X\b\delta -\b R_0\b\gamma_0-\b F_0^{-1}\b s_0]\tr \b F_0 [\b R_0 \b X\b\delta -\b R_0\b\gamma_0-\b F_0^{-1}\b s_0].
\end{align*}
A weighted least square solution of this local maximization
problem gives (\ref{step1}); substitution into (\ref{linap}) gives (\ref{step2}).
\begin{remark}
The choice of $\b X$ is somewhat arbitrary because the design matrix $\b X\b A$, where $\b A$ is any non-singular matrix, implements the same set of constraints as $\b X$. In many cases an obvious choice for $\b X$ is
provided by the context; otherwise, if we are not interested in the
interpretation of $\b\beta$, any numerical complement of $\b K$ will
do.
\end{remark}
\subsection{Comparison of the two algorithms}
Since the matrices $\b C$ and $\b M$ have dimensions $(t-1) \times u$ and $u \times t$ respectively, where the value of $u \geq t$ depends upon the particular parametrization, 
the hardest step in the Aitchson-Silvey's algorithm is $(\b K\tr \b C)\diag(\b M\b\pi)^{-1}\b M$ whose computational complexity is $O(rut)$.
In contrast, the hardest step in the regression algorithm is the computation of $\b R$, which has computational complexity $O(ut^2+t^3)$, making this procedure clearly less efficient. 
However, the regression algorithm can be extended to models with individual covariates, a context in which it is usually much faster than a straightforward extension of the ordinary algorithm; see Section \ref{sec:ind}.

Note that because step adjustments, if used, are not made on the same scale, each algorithm may take a slightly different number of steps to converge.


%
\subsection{Properties of the algorithms} \label{sec:prop}
Detailed conditions for the asymptotic existence of the
maximum likelihood estimates of constrained models are given by
\citet{AitSil:58}; see also \citet{BerRud:02}, Theorem 8. Much less
is known about existence for finite sample sizes where estimates
might fail to exist because of observed zeros.  In this case, some
elements of $\hat{\b\pi}$ may converge to $\b 0$, leading the
Jacobian matrix $\b R$ to become ill-conditioned and making the
algorithm unstable.

Concerning the convergence properties of their algorithm,
\citet[p.~827]{AitSil:58} noted only that it could be seen as a
modified Newton algorithm and that similar modifications had been used
successfully elsewhere. However, it is clear from the form of the
updating equations that, if the algorithms converge to some $\b\theta^*$, then the constraints $\b h(\b\theta^*) = \b 0$ are satisfied, and $\b\theta^*$ is a stationary point of the constrained likelihood.  In addition, as a consequence of the Karush-Kuhn-Tucker conditions, if a local maximum of the constrained objective function exists, then it will be a saddle point of the Lagrangian \citep[see, for example,][]{bertsekas:99}.

To ensure that the stationary point reached by the algorithm is indeed
a local maximum of the original problem, one could look at the eigenvalues of the observed information with respect to $\b\beta$: if these are all strictly positive, then we know that the algorithm has indeed converged to a local maximum.  An efficient formula for computing the observed information matrix is given in \ref{sec:appB}.
Since the log-likelihood of constrained marginal models is not,
in general, concave, it might  be advisable to apply the
algorithm to a range of starting values, in order to
check that the achieved maximum is the global one.
\subsection{Extension to more general constraints}
Occasionally, one may wish to fit general constraints on marginal probabilities without the need to define a marginal log-linear parameterization; an interesting example is provided by the \emph{relational models} of \citet{klimova:11}.  They consider constrained models of the form $\b h(\b\theta) = \b A \log (\b M \b \pi) = \b 0$, where $\b A$ is an arbitrary matrix of full row rank.  Redefine
\begin{align*}
\b K\tr =\frac{\partial \b h}{\partial \b\theta\tr} = \b A \diag(\b
M\b\pi)^{-1} \b M \b\Omega \b G
\end{align*}
and note that, because $\b A$ is not a matrix of row contrasts, $\b
h$ is not homogeneous in $\b\pi$, and thus the simplification of $\b\Omega$ mentioned in Remark \ref{rmk:smooth} does not apply. If the resulting
model is smooth, implying that $\b K$ is a matrix of full column
rank $r$ everywhere in the parameter space, it can be fitted
with the ordinary Aitchison-Silvey algorithm. We now show how
the same model can also be fitted by a slight extension of the
regression algorithm.

Let $\b\theta_0$ be a starting value and $\bar{\b K}_0$ be a right inverse of $\b K\tr$ at $\b\theta_0$; consider a first order
expansion of the constraints
\begin{align*}
\b h = \b h_0+\b K_0\tr(\b\theta-\b\theta_0) = \b K_0\tr(\bar{\b
K}_0 \b h_0+\b\theta-\b\theta_0) = \b 0
\end{align*}
and let $\b X_0$ be a matrix that spans the orthogonal complement of $\b K_0$. Then, with the same order of approximation,
\begin{align*}
\bar{\b K_0} \b h_0 + \b\theta-\b\theta_0=\b X_0\b\beta;
\end{align*}
by solving the above equation for $\b\theta-\b\theta_0$ and substituting into the quadratic approximation of the log-likelihood, we obtain an updating equation similar to (\ref{step1}):
\begin{align*}
\hat{\b\beta}-\b\beta_0=(\b X_0\tr\b F_0\b X_0)^{-1}\b X_0\tr [\b
s_0+\b F_0(\bar{\b K_0}\b h_0-\b X_0\b\beta_0)].
\end{align*}
\section{Modelling the effect of individual-level covariates} \label{sec:ind}
When exogenous individual-level covariates are available, it may be of interest to allow the marginal log-linear parameters $\b\eta$ to depend upon them as in a linear model: $\b\eta_i = \b C\log(\b M\b\pi_i)$ = $\b X_i \b\beta$; here the matrix $\b X_i$ specifies how the non-zero marginal log-linear parameters depend on individual specific information, in addition to structural restrictions such as conditional independencies.
Let $\b y_i$, $i=1,\ldots,n$, be a vector of length $t$ with a 1 in the entry corresponding to the response pattern of the $i$th individual, and all other values 0; define $\b y$ to be the vector obtained by stacking the vectors $\b y_i$, one below the other.
Alternatively, if the sample size is large and the covariates can take only a limited number of distinct values, $\b y_i$ may contain the frequency table of the response variables within the sub-sample of subjects with the $i$th configuration of the covariates; in this case $n$ denotes the number of strata. This arrangement avoids the need to
construct a joint contingency table of responses and covariates; in
addition the covariate configurations with no observations are
simply ignored.

In either case, to implement the Aitchison-Silvey approach, stack the $\b X_i$ matrices one below the other into the matrix $\b X$, and let $\b K$ span the orthogonal complement of $\b X$; as before, we have to fit the set of constraints $\b K\tr \b\eta$ = $\b 0$. However, whilst $q$, the size of $\b\beta$, does not depend on the number of subjects,  $\b H$ is now of size $[n(t-1)-q]\times n(t-1)$, and its computation has complexity $O(n^3 t^2 u)$, where $u \geq t$ as before; in addition, the inversion of the $[n(t-1)-q] \times [n(t-1)-q]$-matrix $\b H' \b F^{-1} \b H$ has complexity $O(n^3 t^3)$.   With $n$ moderately large, this approach becomes almost infeasible.

For the regression algorithm, let $\b\theta_i$ denote the vector of canonical parameters for the
$i$th individual and
$l(\b\theta_i)$ = $\b y_i\tr\b G\b\theta_i-\log[\b 1_t\tr\exp(\b G\b\theta_i)]$
be the contribution to the log-likelihood.  Note that $\b X_i$ need not be of full column rank, a property which must instead hold for the matrix $\b X$; for this reason our assumptions are much weaker than those used by \citet{Lang:96}, and allow for more flexible models. Both the quadratic and the linear approximations must be applied at the individual level; thus we set
$\b\theta_i-\b\theta_{i0} = \b R_{i0} (\b X_i\b\beta-\b\eta_{i0})$,
 and the log-likelihood becomes
\begin{align*}
\sum_{i=1}^n l(\b\theta_i)
&\cong -\frac{1}{2}\sum_{i=1}^n [\b R_{i0} (\b X_i \b\delta - \b\gamma_{i0}) -\b F_{i0}^{-1}\b s_{i0}]\tr \b F_{i0} [\b R_{i0} (\b X_i \b\delta - \b\gamma_{i0})-\b
F_{i0}^{-1}\b s_{i0}],
\end{align*}
where $\b\gamma_{i0}$ = $\b\eta_{i0}-\b X_i\b\beta_0$,
$\b s_i$ = $\b G\tr (\b y_i-\b\pi_i)$ and $\b F_i$ = $\b
G\tr\b\Omega_i\b G$.

Direct calculations lead to the updating expression
\begin{align*}
\hat{\b\beta}-\b\beta_0 = \left(\sum_i \b X_i\tr \b
W_i\b X_i\right)^{-1}\left[\sum \b X_i\tr (\b W_i
\b\gamma_{i0}+\b R_{i0}\tr\b s_{i0})\right],
\end{align*}
where $\b W_i$ = $\b R_{i0}\tr\b G\tr \b\Omega_{i0}\b G\b R_{i0}$.
Thus, the procedure depends upon $n$ only in that we have to sum across subjects, and so the complexity is $O(n(t^2 u + t^3))$.

As an example of the utility of the method described above, consider the application to social mobility tables in \citet{DaFiFo:10}. Social mobility tables are cross classifications of subjects according to their social class (columns) and that of their fathers (rows).
The hypothesis of equality of opportunity would imply that the social class of sons is independent of that of their fathers.
Mediating covariates may induce positive dependence between the
social classes of fathers and sons, leading to the appearance of
limited social mobility; to assess this, \citet{DaFiFo:10} fitted a model in which the vector of marginal parameters for each father-son pair was allowed to depend on individual
covariates, including the father's age, the results of cognitive and
non-cognitive test scores taken by the son at school, and his
academic qualifications. The analysis, based on the UK's National Child Development Survey, included 1,942 father-son pairs classified in a $3\times 3$ table.
All marginal log-linear parameters for the father were allowed to depend on father's age, the only available covariate for fathers; the parameters for the son and the interactions were allowed to depend on all 11 available covariates.
The fitted model used 76 parameters.

\section{$L_1$-penalized parameters} \label{sec:l1}
\citet{evans:thesis} shows that, in the context of marginal
log-linear parameters, consistent model selection can be performed
using the so-called adaptive lasso.  Since the adaptive lasso uses
$L_1$-penalties, we might
therefore be interested in relaxing the equality constraints discussed
above to a penalization framework, in which we maximize the penalized
log-likelihood
\begin{align*}
\phi(\b\theta) \equiv l(\b\theta) - \sum_{j=1}^{t-1} \nu_j |\eta_j(\b\theta)|,
\end{align*}
for some vector of penalties $\b\nu = (\nu_j) \geq \b 0$.

The advantage of penalties of this form is that one can obtain
parameter estimates which are exactly zero \citep{tibshirani:96}.
Setting parameters of the form $\b\eta$ to zero corresponds to
many interesting submodels, such as those defined by conditional independences,
\citep{FoLuMa:10, rudas:10}, we can
therefore perform model selection without the need to fit many models
separately.
For now, assume that no equality constraints hold for $\b\eta$, so we
can take $\b X$ to be the identity, and $\b\beta = \b\eta$. This gives
the quadratic form
\begin{align*}
Q(\b\eta) = -\frac{1}{2}[\b R_0 (\b \eta - \b\eta_0) - \b F_0^{-1}\b
s_0]\tr \b F_0 [\b R_0 (\b \eta - \b\eta_0) - \b F_0^{-1}\b s_0]
\end{align*}
approximating $l(\b\theta)$ as before.  Then $\phi$ is approximated by
\begin{align*}
\tilde\phi(\b\eta) \equiv -\frac{1}{2}[\b R_0 (\b \eta - \b\eta_0) -
\b F_0^{-1}\b s_0]\tr \b F_0 [\b R_0 (\b \eta - \b\eta_0) - \b
F_0^{-1}\b s_0] - \sum_{j} \nu_j |\eta_j|,
\end{align*}
and we can attempt to maximize $\phi$ by repeatedly solving the
sub-problem of maximizing $\tilde\phi$.  Now, because the quadratic
form $Q(\b\eta)$ is concave and differentiable, and the absolute
value function $|\cdot|$ is concave, coordinate-wise ascent is
guaranteed to find a local maximum of $\tilde\phi$ \citep{tseng:01}.
Coordinate-wise ascent cycles through $j = 1, 2, \ldots, t-1$,
at each step minimizing
\begin{align*}
-\frac{1}{2}[\b R_0 (\b \eta - \b\eta_0) - \b F_0^{-1}\b s_0]\tr \b
F_0 [\b R_0 (\b \eta - \b\eta_0) - \b F_0^{-1}\b s_0] - \nu_j
|\eta_j|
\end{align*}
with respect to $\eta_j$, with $\eta_1, \ldots, \eta_{j-1}, \eta_{j+1}, \ldots, \eta_{t-1}$
held fixed.  This is solved just by taking
\begin{align*}
\eta_j = \sign(\check\eta)(|\check\eta| - \nu_j)_+,
\end{align*}
where $a_+ = \max\{a, 0\}$, and $\check\eta_j$ minimizes $Q$ with
respect to $\eta_j$ \citep{friedman:10}. This approach to the
sub-problem may require a large number of iterations, but it is
extremely fast in practice because each step is so simple. If the
overall algorithm converges, then by a similar argument to that of
Section \ref{sec:prop}, together with the fact that
$\tilde\phi$ has the same supergradient as $\phi$ at $\b \eta = \b
\eta_0$, we see that we must have reached a local maximum of $\phi$.

Since penalty selection for the lasso and adaptive lasso is typically
performed using computationally
intensive procedures such as cross validation, its implementation
makes fast algorithms such as the one outlined above essential.

\subsection*{Acknowledgements}

We thank two anonymous referees, the associate editor, and editor for their suggestions, corrections, and patience. 

\appendix
\section{Computation of the observed information matrix} \label{sec:appB}
\begin{lemma}
Suppose that $\b A$ is a $p\times q$ matrix, and that $\b y$, $\b b$, $\b x$ and $\b u$ are column vectors with respective lengths $q$, $p$, $k$ and $r$.
Then if $\b A$ and $\b b$ are constant,
\begin{equation}
\frac{\partial }{\partial \b x\tr} \, \diag(\b A\b y)\b b = \diag(\b
b)\b A\frac{\partial \b y}{\partial \b u\tr} \frac{\partial \b
u}{\partial \b x\tr}.\label{Lobin}
\end{equation}
\end{lemma}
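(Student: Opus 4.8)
The plan is to prove the identity \eqref{Lobin} by direct computation of a generic entry of the matrix on the left-hand side, using the chain rule, and then recognizing the result as the corresponding entry of the matrix on the right-hand side. First I would write $\diag(\b A \b y)\b b$ componentwise: its $i$th entry is $b_i (\b A \b y)_i = b_i \sum_{\ell} A_{i\ell} y_\ell$. Differentiating this with respect to $x_m$ and applying the chain rule through $\b u$, I get
\begin{align*}
\frac{\partial}{\partial x_m}\Bigl( b_i \sum_\ell A_{i\ell} y_\ell \Bigr) = b_i \sum_\ell A_{i\ell} \frac{\partial y_\ell}{\partial x_m} = b_i \sum_\ell A_{i\ell} \sum_k \frac{\partial y_\ell}{\partial u_k}\frac{\partial u_k}{\partial x_m},
\end{align*}
where I use that $\b A$ and $\b b$ are constant so they pass through the derivative untouched.

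Next I would identify this double sum as a matrix product. The factor $b_i$ is the $(i,i)$ entry of $\diag(\b b)$; the factor $A_{i\ell}$ is the $(i,\ell)$ entry of $\b A$; the factor $\partial y_\ell/\partial u_k$ is the $(\ell,k)$ entry of $\partial \b y/\partial \b u\tr$; and $\partial u_k/\partial x_m$ is the $(k,m)$ entry of $\partial \b u/\partial \b x\tr$. Reading off the indices, the expression above is exactly the $(i,m)$ entry of the product $\diag(\b b)\,\b A\,(\partial \b y/\partial \b u\tr)(\partial \b u/\partial \b x\tr)$, which is the right-hand side of \eqref{Lobin}. Since $i$ and $m$ were arbitrary, the matrix identity follows.

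There is no real obstacle here; the lemma is essentially a bookkeeping statement about the chain rule combined with the observation that $\diag(\b A\b y)\b b = \diag(\b b)\b A\b y$, i.e. that pre-multiplying a vector by $\diag(\b A\b y)$ is the same as the Hadamard product $\b b \odot (\b A\b y)$, which is symmetric in its two arguments. The one point requiring a little care is keeping the index ranges straight ($i$ ranging over $1,\dots,p$, the summation index $\ell$ over $1,\dots,q$, and $k$ over the length of $\b u$) so that the dimensions of the asserted product genuinely conform, but this is routine. I would therefore present the proof as the short componentwise calculation sketched above, prefaced by the remark that $\diag(\b A\b y)\b b = \diag(\b b)\,\b A\b y$.
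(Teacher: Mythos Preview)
Your proof is correct and follows essentially the same approach as the paper's: both rely on the chain rule together with the commutativity $\diag(\b A\b y)\b b = \diag(\b b)\,\b A\b y$. The only cosmetic difference is that the paper argues column-by-column (writing out the columns $\diag(\b A\b y_{u_j})\b b$ of the Jacobian with respect to $\b u$ and swapping to $\diag(\b b)\b A\b y_{u_j}$), whereas you work entry-by-entry; the underlying idea is identical.
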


\begin{proof}
\begin{align*}
\frac{\partial }{\partial \b x\tr}\diag(\b A\b y)\b b
&=\frac{\partial }{\partial \b u\tr}\diag(\b A\b y)\b b\frac{\partial \b u}{\partial\b x\tr} \\
&= \left(\diag(\b A\b y_{u1})\b b,\:\cdots \:\diag(\b A\b y_{uh})\b b\right)\frac{\partial \b u}{\partial\b x\tr}\\
&=\left(\diag(\b b)\b A\b y_{u1},\:\cdots \:\diag(\b b)\b A \b y_{uh}\right)\frac{\partial \b u}{\partial\b x\tr}\\
&=\diag(\b b)\b A\frac{\partial \b y}{\partial \b u\tr} \frac{\partial \b
u}{\partial \b x\tr}.
\end{align*}
\end{proof}


The observed information matrix is minus the second derivative of the
log-likelihood with respect to $\b\beta$, that is
\begin{align*}
-\frac{\partial}{\partial \b\beta\tr}\left[\frac{\partial l(\b\theta)}{\partial\b\beta}\right]
&= -\frac{\partial }{\partial \b\beta\tr}\b X\tr \b R\tr \b G\tr (\b y-n\b\pi)
= -\left[\frac{\partial}{\partial \b\theta\tr}\b X\tr \b R\tr \b G\tr (\b y-n\b\pi)\right] \b R \b X\\
&= n \b X\tr \b R\tr \b G\tr \b\Omega  \b G\b R \b X - \b X\tr \frac{\partial \b R\tr}{\partial \b\theta\tr} \b G\tr (\b y-n\b\pi) \b R \b X.
\end{align*}
Since $\b s$ depends on $\b\theta$ through both $(\b y-n\b\pi)$ and $\b
R$, the above derivative has two main components, where the one
obtained by differentiating $(\b y-n\b\pi)$ is minus the expected
information.
Using the well known expression for the derivative of an inverse matrix, it only remains to compute
\begin{align*}
\b X\tr \frac{\partial \b R\tr}{\partial \b\theta\tr} \b G\tr (\b y-n\b\pi) \b R \b X &=
\b X\tr \b R\tr \frac{\partial {\b R\tr}^{-1}}{\partial \b\theta\tr} \b R\tr \b G\tr (\b y-n\b\pi) \b R \b X
= \b A \frac{\partial{\b R\tr}^{-1}}{\partial\b\theta'} \b b  \b R \b X\\
\intertext{where $\b A = \b X' \b R'$ and $\b b = \b R\tr \b G\tr (\b y-n\b\pi)$, giving}
&= \b A \b G' \frac{\partial [\diag(\b \pi) \b M' \diag(\b M \b \pi)^{-1}]}{\partial \b\theta\tr} \b C' \b b  \b R \b X.
\end{align*}
By two applications of (\ref{Lobin}), this is
\begin{align*}
\begin{split}
&\b A\b G\tr \big[\diag(\b M\tr \diag(\b M\b\pi)^{-1}\b C\tr \b
b)\\
&\qquad\qquad - \diag(\b\pi)\b M\tr \diag(\b C\tr\b b)\diag(\b M\b\pi)^{-2} \b
M \big]\b\Omega\b G  \b R \b X.
\end{split}
\end{align*}
\bibliographystyle{plainnat}
\bibliography{../bib}

\begin{thebibliography}{22}
\providecommand{\natexlab}[1]{#1}
\providecommand{\url}[1]{\texttt{#1}}
\expandafter\ifx\csname urlstyle\endcsname\relax
  \providecommand{\doi}[1]{doi: #1}\else
  \providecommand{\doi}{doi: \begingroup \urlstyle{rm}\Url}\fi

\bibitem[Agresti(2002)]{agresti:02}
A.~Agresti.
\newblock \emph{Categorical data analysis}.
\newblock John Wiley and Sons, 2002.

\bibitem[Aitchison and Silvey(1958)]{AitSil:58}
J.~Aitchison and S.~D. Silvey.
\newblock Maximum-likelihood estimation of parameters subject to restraints.
\newblock \emph{Ann.~Math.~Stat.}, 29\penalty0 (3):\penalty0 813--828, 1958.

\bibitem[Bartolucci et~al.(2007)Bartolucci, Colombi, and Forcina]{BaCoFo:07}
F.~Bartolucci, R.~Colombi, and A.~Forcina.
\newblock An extended class of marginal link functions for modelling
  contingency tables by equality and inequality constraints.
\newblock \emph{Statist.~Sinica}, 17\penalty0 (2):\penalty0 691, 2007.

\bibitem[Bergsma et~al.(2009)Bergsma, Croon, and Hagenaars]{BeCrHa:09}
W.~Bergsma, M.~Croon, and J.~A. Hagenaars.
\newblock \emph{Marginal Models: For Dependent, Clustered, and Longitudinal
  Categorial Data}.
\newblock Springer Verlag, 2009.

\bibitem[Bergsma(1997)]{Bergsma:97}
W.~P. Bergsma.
\newblock \emph{Marginal models for categorical data}.
\newblock Tilburg University Press, Tilburg, 1997.

\bibitem[Bergsma and Rudas(2002)]{BerRud:02}
W.~P. Bergsma and T.~Rudas.
\newblock Marginal models for categorical data.
\newblock \emph{Ann.~Statist.}, 30\penalty0 (1):\penalty0 140--159, 2002.

\bibitem[Bertsekas(1999)]{bertsekas:99}
D.P. Bertsekas.
\newblock \emph{Nonlinear programming}.
\newblock Athena Scientific, second edition, 1999.

\bibitem[Colombi and Forcina(2001)]{ColFor:01}
R.~Colombi and A.~Forcina.
\newblock Marginal regression models for the analysis of positive association
  of ordinal response variables.
\newblock \emph{Biometrika}, 88\penalty0 (4):\penalty0 1001--1019, 2001.

\bibitem[Dardanoni et~al.(2012)Dardanoni, Fiorini, and Forcina]{DaFiFo:10}
V.~Dardanoni, M.~Fiorini, and A.~Forcina.
\newblock Stochastic monotonicity in intergenerational mobility tables.
\newblock \emph{J.~Appl.~Economtrics}, 27:\penalty0 85--107, 2012.

\bibitem[Evans(2011)]{evans:thesis}
R.~J. Evans.
\newblock \emph{Parametrizations of discrete graphical models}.
\newblock PhD thesis, University of Washington, 2011.

\bibitem[Evans and Richardson(2011)]{EvaRic:11}
R.~J. Evans and T.~S. Richardson.
\newblock Marginal log-linear parameters for graphical markov models.
\newblock arXiv:1105.6075, 2011.

\bibitem[Forcina et~al.(2010)Forcina, Lupparelli, and Marchetti]{FoLuMa:10}
A.~Forcina, M.~Lupparelli, and G.~M. Marchetti.
\newblock Marginal parameterizations of discrete models defined by a set of
  conditional independencies.
\newblock \emph{Journ.~Mult.~Analysis}, 101\penalty0 (10):\penalty0 2519--2527,
  2010.

\bibitem[Friedman et~al.(2010)Friedman, Hastie, and Tibshirani]{friedman:10}
J.~Friedman, T.~Hastie, and R.~Tibshirani.
\newblock Regularization paths for generalized linear models via coordinate
  descent.
\newblock \emph{J.~Stat.~Soft.}, 33\penalty0 (1), 2010.

\bibitem[Glonek and McCullagh(1995)]{glonek:95}
G.~F.~V. Glonek and P.~McCullagh.
\newblock Multivariate logistic models.
\newblock \emph{J.~R.~Statist.~Soc.~B}, 57\penalty0 (3):\penalty0 533--546,
  1995.

\bibitem[Klimova et~al.(2011)Klimova, Rudas, and Dobra]{klimova:11}
A.~Klimova, T.~Rudas, and A.~Dobra.
\newblock Relational models for contingency tables.
\newblock arXiv:1102.5390, 2011.

\bibitem[Lang(1996)]{Lang:96}
J.B. Lang.
\newblock Maximum likelihood methods for a generalized class of log-linear
  models.
\newblock \emph{Ann.~Statist.}, 24\penalty0 (2):\penalty0 726--752, 1996.

\bibitem[Lang and Agresti(1994)]{langagresti}
J.B. Lang and A.~Agresti.
\newblock Simultaneously modeling joint and marginal distributions of
  multivariate categorical responses.
\newblock \emph{J.\ Amer.\ Statist.\ Assoc.}, 89\penalty0 (426):\penalty0
  625--632, 1994.

\bibitem[Liang et~al.(1992)Liang, Zeger, and Qaqish]{liang:92}
K.~Y. Liang, S.~L. Zeger, and B.~Qaqish.
\newblock Multivariate regression analyses for categorical data.
\newblock \emph{J.~R.~Statist.~Soc.~B}, 54\penalty0 (1):\penalty0 3--40, 1992.

\bibitem[McCullagh and Nelder(1989)]{mccullagh:89}
P.~McCullagh and J.~A. Nelder.
\newblock \emph{Generalized linear models}.
\newblock Chapman \& Hall/CRC, 1989.

\bibitem[Rudas et~al.(2010)Rudas, Bergsma, and N{\'e}meth]{rudas:10}
T.~Rudas, W.~P. Bergsma, and R.~N{\'e}meth.
\newblock Marginal log-linear parameterization of conditional independence
  models.
\newblock \emph{Biometrika}, 97\penalty0 (4):\penalty0 1006--1012, 2010.

\bibitem[Tibshirani(1996)]{tibshirani:96}
R.~Tibshirani.
\newblock Regression shrinkage and selection via the lasso.
\newblock \emph{J.~Royal Statist.~Soc.~B}, 58\penalty0 (1):\penalty0 267--288,
  1996.

\bibitem[Tseng(2001)]{tseng:01}
P.~Tseng.
\newblock Convergence of a block coordinate descent method for
  nondifferentiable minimization.
\newblock \emph{J.~Optim.~Theory Appl.}, 109\penalty0 (3):\penalty0 475--494,
  2001.

\end{thebibliography}

\end{document}